\newcommand{\size}[1]{\lvert #1 \rvert}
\newcommand{\figref}[1]{Figure~\ref{fig:#1}}
\newcommand{\tabref}[1]{Table~\ref{tab:#1}}
\renewcommand{\eqref}[1]{Equation~\ref{eq:#1}}
\newcommand{\secref}[1]{Section~\ref{sec:#1}}
\newcommand{\appref}[1]{Appendix~\ref{app:#1}}
\newcommand{\assnref}[1]{Assumption~\ref{assn:#1}}
\newcommand{\assntworef}[2]{Assumptions~\ref{assn:#1} and~\ref{assn:#2}}
\newcommand{\thmref}[1]{Theorem~\ref{thm:#1}}
\newcommand{\thmstworef}[2]{Theorems~\ref{thm:#1} and~\ref{thm:#2}}
\newcommand{\thmsthreeref}[3]{Theorems~\ref{thm:#1}, ~\ref{thm:#2}, and~\ref{thm:#3}}
\newtheorem{assumption}{Assumption}
\begin{document}

\title{Turing Completeness and \emph{Sid Meier's Civilization}}
\titlerunning{Turing Completeness and \emph{Sid Meier's Civilization}}

\author{Adrian de Wynter}
\maketitle

\begin{abstract}\footnote{Preprint. This paper is for scholastic purposes only and the author's affiliations, past and present, should not be considered an endorsement.}

We prove that three strategy video games from the \emph{Sid Meier's Civilization} series: \emph{Sid Meier's Civilization: Beyond Earth}, \emph{Sid Meier's Civilization V}, and \emph{Sid Meier's Civilization VI}, are Turing complete. We achieve this by building three universal Turing machines--one for each game--using only the elements present in the games, and using their internal rules and mechanics as the transition function. The existence of such machines imply that under the assumptions made, the games are undecidable. 
We show constructions of these machines within a running game session, and we provide a sample execution of an algorithm--the three-state Busy Beaver--with one of our machines.
\keywords{Recreational mathematics \and Turing completeness \and Sid Meier's Civilization}
\end{abstract}

\section{Introduction}

The \emph{Sid Meier's Civilization}\footnote{All products, company names, brand names, trademarks, and images are properties of their respective owners. The images are used here under Fair Use for the educational purpose of illustrating mathematical theorems.} series is a collection of turn-based strategy video games, with a focus on building and expanding an empire through technological, social, and diplomatic development. 
The scale and scope of these games involve intricate rules and mechanics, along with multiple victory conditions, all of which makes them interesting from a computational point of view. 

In this paper we focus on leveraging the mechanics inherent to these games to build universal Turing machines. Universal Turing machines, or UTMs, are mathematical abstractions used in the rigorous study of mechanical computation. By definition, a UTM is able to execute any program that can be ran in a computer, and is effectively equivalent to a computer with unbounded memory. 

We introduce explicit constructions of UTMs for three installments of the \emph{Civilization} series: \emph{Sid Meier's Civilization: Beyond Earth} (Civ:BE from now on), \emph{Sid Meier's Civilization V} (Civ:V), and \emph{Sid Meier's Civilization VI} (Civ:VI). Our only strong constraint is that the map size and turn limits must be infinite, which is in line with the requirements around a UTM. 

For the constructions of Civ:BE and Civ:V, we design UTMs that are relatively easy to visualize, although "large"--that is, the transition function has many instructions. We also describe, but not prove, smaller constructions. Finally, we note that an immediate consequence of the existence of an internal game state representing a UTM is undecidability of said game under the stated constraints. We provide constructions of each UTM in their respective games, and conclude by providing an example implementation and execution of the three-state Busy Beaver game \cite{BusyBeaver} in the Turing machine built for Civ:BE.

\section{Background}\label{sec:background}

\subsection{Turing machines and Turing completeness}

A Turing machine can be physically visualized as an automaton that executes instructions as dictated by a function $\delta$, and that involve reading or writing to a tape of infinite length via a head. Following the definition from Hopcroft et al. \cite{HopcroftUllman}, they are defined as a 7-tuple $\langle Q, \Gamma, \Sigma, \delta, b, q_0, F \rangle$, where:
\begin{itemize}
    \item $Q$ is a set of possible \emph{states} that can be taken internally by the Turing machine. Namely, $q_0 \in Q$ is the initial state, and $F \subseteq Q$ is the set of final (or accepting) states. 
    \item $\Gamma$ is the set of tape \emph{symbols} on which the Turing machine can perform read/write operations. In particular, $b \in \Gamma$ is the blank symbol, and $\Sigma \subseteq \Gamma$, $b \not\in \Sigma$ is the set of symbols which may be initially placed in the tape.
    \item $\delta : (Q \setminus F) \times \Gamma \nrightarrow Q \times \Gamma \times \{L, R\}$ is the \emph{transition function}. It is a partial function that takes in all states from the machine, along with the current read from the tape; it determines the next state to be taken by the Turing machine, as well as whether to move the head left ($L$) or right ($R$) along the tape. Note that not moving the tape can be trivially encoded in this definition.
\end{itemize}

Turing machines are a mathematical model, and are limited in practice due to the infinite length requirement on the tape. On the other hand, the expressive power behind the theory of Turing machines allows for the rigorous study of computational processes. 

Concretely, since Turing machines are finite objects, it is possible to encode a Turing machine $M_1$ and use it as an input to a separate machine $M_2$, so that $M_2$ simulates $M_1$. A $(m, n)$-\emph{universal Turing machine}, or $(m, n)$-UTM, is a Turing machine that is able to simulate any other Turing machine, and such that $\size{Q} = m, \size{\Gamma} = n$. The Church-Turing Thesis states that such a construct models precisely the act of computation--concretely, the set of functions that can be computed by a UTM is precisely the set of computable functions \cite{RogersComputability}. 

A collection of rules and objects under which a simulation of a UTM is possible is said to be \emph{Turing complete}. Turing complete models of computation are varied, from mathematical models such as the $\lambda$-calculus and the theory of $\mu$-recursive functions; to other rule-based systems like Conway's Game of Life \cite{ConwaysGOL} and the card game \emph{Magic: The Gathering} \cite{Churchill}; and even video games and other software such as \emph{Minesweeper} \cite{Minesweeper} and Microsoft PowerPoint \cite{PowerPoint}.

\subsection{General Mechanics of \emph{Sid Meier's: Civilization}}

All of the \emph{Sid Meier's: Civilization} installments are turn-based, alternating, multiplayer games. The players are in charge of an empire,\footnote{In the case of Civ:BE, an extrasolar colony.} and good management of resources and relationships with other players are key to achieve one of the multiple possible victory conditions allowable. These victories (e.g., Diplomatic, Militaristic, Scientific) and their specific mechanics are of interest when analyzing the computational complexity of the game, but not when designing a Turing machine. 
In this paper we limit ourselves to describing the rules governing the actions of units on the map, and which are common to the games that we will be analyzing. Since the naming conventions for the components of the games vary across them--even though these components may serve the same purpose--we also introduce a unified notation. 

The games that we will be discussing (Civ:BE, Civ:V, and Civ:VI) are played on a finite map divided in hexagonal tiles, or \emph{hexes}. Hexes are composed of different geographical types (e.g., oceans, deserts, plains), and may contain certain features (e.g., forests, oil) dependent on the initial conditions of the game. 

All tiles have an inherent \emph{resource} yield, which is used by the player to maintain the units and buildings that they own, and to acquire more advanced \emph{technologies} and \emph{social policies}.
Some examples of these resources are Food, Production, Culture, and Science. 
One or more player-controlled units known as \emph{Workers} are usually tasked to build \emph{improvements} on a tile, which alters their yields. 
As an example, technologies are acquired with Science, and Workers can build improvements on tiles (e.g. an Academy) to increase their Science yield. 

With some notable exceptions, described in the following section, Workers may only build improvements in tiles owned by a \emph{City}. A City can be seen as a one-tile special improvement, and has three main duties: to produce ("train") units, to control and expand the boundaries of the owner's frontiers, and to produces \emph{Citizens} to work the tiles and their improvements. Citizens may only work the tiles owned by the City, but players are able to micromanage the placement of Citizens to alter their resource yield. 

It is important to note that on every turn, the player executes all or some of the actions available to them, such as building improvements, researching technologies and social policies, and focuses on the overall management of the empire. 
On the other hand, the specific types and mechanics around Cities, resources, features, available improvements, technologies and social policies vary from game to game. They will be discussed in detail in the next section, as they will be the key for building our UTMs. All of the UTMs are built inside a running game session, and leverage the concepts described here.

\section{Civ:BE, Civ:V, and Civ:VI are Turing Complete}\label{sec:turinguniv}

For this section, the following two assumptions hold:

\begin{assumption}\label{assn:assumption1}
The number of turns in Civ:BE, Civ:V, and Civ:VI is infinite, and there is only one player in the game.
\end{assumption}
\begin{assumption}\label{assn:assumption2}
The map extends infinitely in either direction.
\end{assumption}

The rationale behind \assnref{assumption1} is related to the fact that another player may interfere with the computation by--say--attacking the Workers, or reaching any of the victory conditions before the Turing machine terminates. Likewise, \assnref{assumption2} is due to the fact that we intend to utilize the map as the tape for the UTM. We will discuss the feasibility of these assumptions in \secref{conclusion}. For now, it suffices to note that \assnref{assumption1} is easily achievable within the game itself; and \assnref{assumption2} is a requirement present in all UTM constructions, and inherent to the definition of a Turing machine.

\subsection{A $(10, 3)$-UTM in Civ:BE}

For the construction of this UTM, we rely on two Workers carrying out instructions on separate, non-overlapping sections of the map: one keeps track of states, and another acts as the head. The latter operates over the tape, which we consider to be the rest of the map. We do not require the player to own any of the tiles on the tape. However, the state hexes are finite and must be owned by the player. Building any improvement in Civ:BE takes a certain amount of turns, but the Workers may build and repair any number of them. 

\subsubsection{The Tape:}

The tape is a continuous strip over the entire map, save for the state tiles, and without loss of generality, we assume that it is comprised of flat, traversable, hexes. Flatness is needed because irregular terrain requires more movement points to traverse. 
The set of symbols $\Gamma$ is based off specific improvements that can be added and removed indefinitely by the tape Worker anywhere on the map--namely, Roads. The tape Worker adds and removes Roads according to the transition function, and a fast-moving unit (in this case, a Rover) pillages it. Note that for this setup to work, the Worker and the Rover will must move at the same time. Then, $\Gamma = \{\text{No Improvement}, \text{Road}, \text{Pillaged Road}\}$.

\subsubsection{The States:}

We assume that the player has at least nine flat, desert, unimproved tiles which will act as the states. Desert is required because we  map states to the resource yields provided by a specific improvement--the \emph{Terrascape}. Income from other tile types may interfere with state tracking. 
Moreover, we assume the player has unlocked the \emph{civic} (social policy) \emph{Ecoscaping}, which provides $+1$ Food, Production, and Culture for every Terrascape built, along with the relevant technology (\emph{Terraforming}) required to build them. The Worker in charge of the states builds and remove Terrascapes on the state hexes, and the normalized change in Culture income serves as the state value $q_i \in Q$, $Q = \{ 0, \dots, 9 \}$. 

See \figref{civbe} for a sample construction. 

\begin{figure}
\centering
\includegraphics[width=\columnwidth]{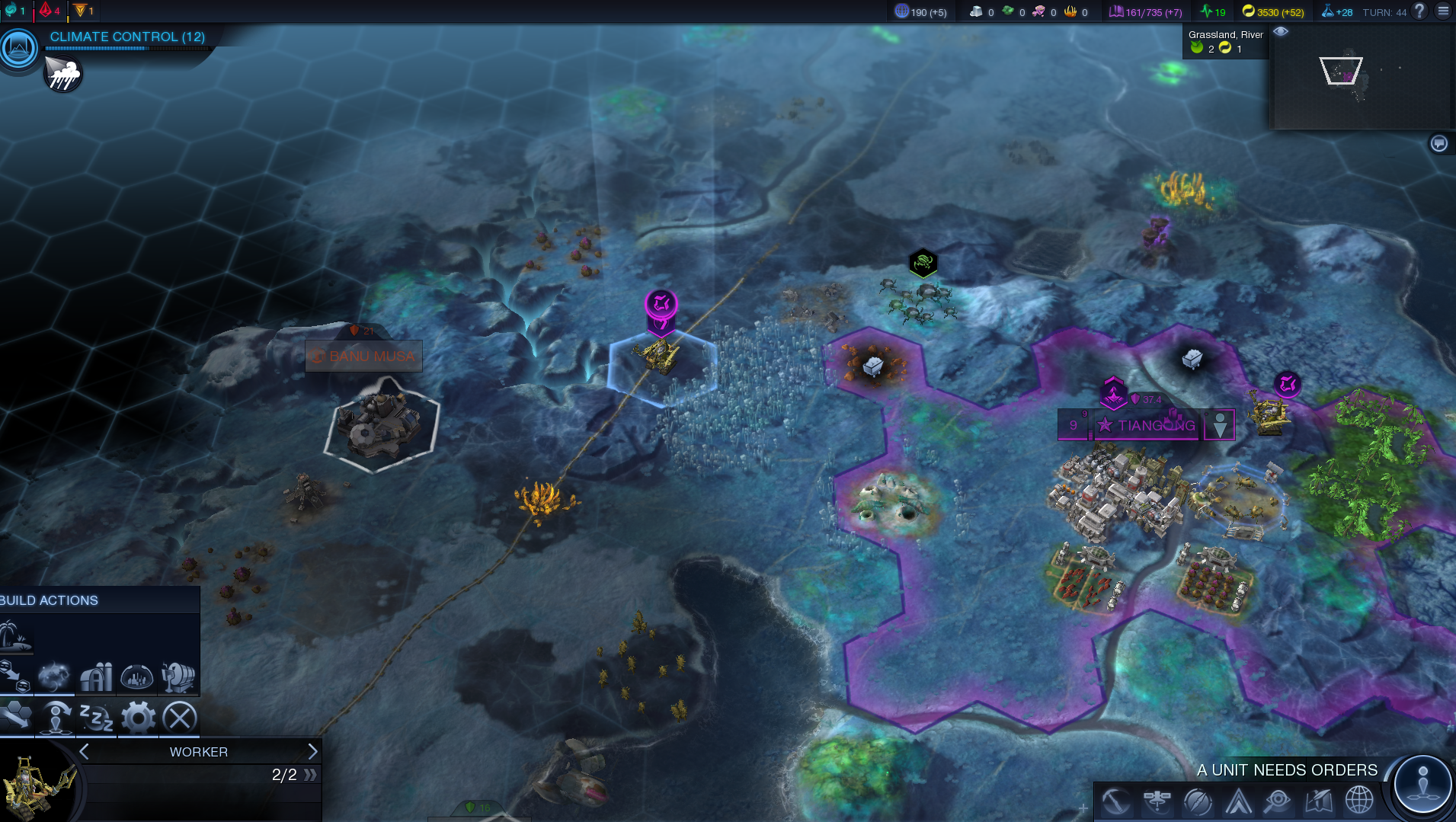}
\caption{A $(10, 3)$-UTM built within Civ:BE. The tape is located to the left of the image, and runs from the top middle of the screen to the bottom left. It currently has Roads (symbols in $\Gamma$) on every hex, except one pillaged Road. The head (the tape Worker and the Rover) are positioned in the middle of the screen, reading the symbol "Road"--the read is implicit, but can also be seen from the "Build Actions" menu, which does not allow the player to build a Road on an existing Road. The UTM is in state $q_2$, since the total Culture yield in this turn is $C_t = 7$ (the purple number, top right on the image). Each Terrascape (lush green tiles on the right) contributes $3$ Culture, and the base Culture yield is $C_* = 1$, so $q_2 = (C_t - C_*)/3 = 2$. 
For this particular map, desert tiles appear as solid ice.}
\label{fig:civbe}
\end{figure}

\begin{theorem}\label{thm:civbeutm}
Suppose \assntworef{assumption1}{assumption2} hold. 
Also suppose that there is a section of the map disjoint from the rest, owned by the player, and comprised of at least nine desert, unimproved tiles; and that the player is able to build and maintain at least nine worked Terrascapes, has a constant base per-turn Culture yield $C_*$, and has the Ecoscaping civic. 

Then the following is a $(10, 3)$-UTM:

\begin{align}
Q &= \{ 0, \dots, 9 \}, \\
\Gamma &= \{ \text{No Improvement}, \text{Road}, \text{Pillaged Road} \},
\end{align}

where $q_i \in Q$ is the difference in the normalized Culture yields in turn $t$, $q_i = C_t - C_*$, and the transition function $\delta$ is as described in \tabref{civbe}.

\end{theorem}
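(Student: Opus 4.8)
The plan is to verify that the stated 7-tuple, together with the transition table in \tabref{civbe}, faithfully simulates an arbitrary Turing machine by exhibiting a correspondence between the game mechanics and the formal operation of a Turing machine. Since the problem reduces to checking that each required primitive operation of a Turing machine can be realized by the in-game dynamics, I would organize the argument around the three components of the 7-tuple that still need justification: the state register, the tape/head behavior, and the transition function itself (the remaining pieces -- $b = \text{No Improvement}$, $\Sigma = \{\text{Road}\}$, $q_0$, and $F$ -- are fixed by convention and carried along for free).

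First I would establish the \emph{state invariant}: at the start of each turn, if the Worker in charge of states is maintaining exactly $k$ worked Terrascapes on the nine designated desert tiles, then the total Culture yield is $C_t = C_* + 3k$ (using that Ecoscaping grants $+1$ Culture per Terrascape and Terraforming is unlocked, while the base yield $C_*$ is constant by hypothesis and the disjoint, player-owned desert section guarantees no other tile contributes). Hence the normalized quantity $q_i = (C_t - C_*)/3 = k$ ranges over $\{0,\dots,9\}$ and is well-defined and readable by the player each turn; this is what lets a single integer in $Q$ be encoded and ``read'' from the game state. I would note the off-by-normalization subtlety in the theorem statement ($q_i = C_t - C_*$ versus $q_i = (C_t - C_*)/3$ as used in \figref{civbe}) and simply adopt the normalized reading consistently, since only the bijection $k \leftrightarrow q_i$ matters.

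Next I would establish the \emph{tape invariant}: the map minus the state tiles is a bi-infinite strip of flat traversable hexes (Assumption~\ref{assn:assumption2} plus the flatness stipulation, which ensures uniform movement cost of one point per hex), each hex carrying one of the three symbols in $\Gamma$, with all but finitely many equal to No Improvement. The head is the (tape Worker, Rover) pair occupying a common hex; the current read is the symbol on that hex, which is observable because the game's Build Actions menu distinguishes the three cases (cannot build a Road on a Road; can pillage a Road; etc.). I then need the key mechanical claim that in a single turn the pair can: (i) write a symbol -- the Worker builds a Road (No Improvement $\to$ Road) or the Rover pillages (Road $\to$ Pillaged Road) or the Worker repairs/removes to return to No Improvement -- and (ii) simultaneously move one hex left or right, with Worker and Rover staying co-located because both move on the same turn and flat terrain costs them one movement point per hex. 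This is the step I expect to be the \textbf{main obstacle}: one must argue carefully that write-then-move (or move-then-write) can be scheduled within the turn structure without the two units desynchronizing, that building/pillaging ``takes a certain amount of turns'' does not break the one-step abstraction (absorb the build duration into a macro-step, or use the stated fact that Workers can build and repair unбоundedly), and that the Rover being ``fast-moving'' is exactly what allows it to keep pace with the Worker so the head remains a single well-defined cell.

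Finally I would close the argument by checking the \emph{transition function}: read off \tabref{civbe}, confirm it has the signature $\delta:(Q\setminus F)\times\Gamma \nrightarrow Q\times\Gamma\times\{L,R\}$, and verify that each table entry $(q,\gamma)\mapsto(q',\gamma',D)$ is implementable by the combined action of the two Workers and the Rover in one macro-step -- the state Worker adjusts the number of Terrascapes from $q$ to $q'$, the head writes $\gamma'$ over $\gamma$ and moves in direction $D$ -- so that the game state after the macro-step encodes exactly the Turing-machine configuration after applying $\delta$ once. Since the table is (by the paper's design) a universal Turing machine's transition function, the induction over macro-steps shows the construction simulates that UTM, and therefore by the Church--Turing Thesis simulates every Turing machine; this is precisely the claim that the displayed 7-tuple is a $(10,3)$-UTM. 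I would remark that undecidability of the game under Assumptions~\ref{assn:assumption1} and~\ref{assn:assumption2} follows immediately, since deciding any nontrivial property of the running session would decide the halting problem for the simulated machine.
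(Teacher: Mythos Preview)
Your proposal is essentially the same approach as the paper's: the paper's proof (in \appref{civbeapp}) also consists of exhibiting a bijection between the transition table in \tabref{civbe} and a known universal machine, and then arguing that each in-game ``macro-step'' realizes one abstract transition. The only substantive differences are of emphasis: the paper names the target machine explicitly (Rogozhin's $(10,3)$-UTM) rather than leaving it as ``by the paper's design,'' and in place of your informal ``absorb the build duration into a macro-step'' it computes the explicit worst-case overhead of $5T+M$ turns per transition (five Terrascape builds/removals plus one Road), concluding the simulation is slower than the UTM by only a constant factor. Your state/tape invariants are correct and more detailed than what the paper writes out, but they are the same underlying checks the paper takes for granted.
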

\begin{proof}
A full proof can be found in \appref{civbeapp}. It consists of two parts: building a bijection between the transition function and an existing $(10, 3)$-UTM, and then showing that the time delay between our construction and the aforementioned UTM is bounded by a constant.
\end{proof}

\begin{remark}
This is not the only possible UTM that can be built within Civ:BE. Unlocking the ability of a worker to add and remove Miasma expands the set of symbols from the tape yields a $(7, 4)$-UTM with a smaller set of instructions. 
\end{remark}

There are a few things to highlight from this construction, and that will apply to the rest of the UTMs presented in this paper. 
First, while the base constant yield is hard to achieve in-game, it is relatively simple to account for: simply execute every move at the beginning of the turn, record $C_*$, and then execute the transition function for the UTM. Even better, counting the number of Terrascapes on a specific tile would work just as well.

Second, remark that this UTM, along with the others presented in the paper, may be fully automated by using the API provided by the publisher. Manual simulation is, although tiresome, also possible, as displayed in \secref{busybeaver}. 

Finally, it is well-known in the field of theoretical computer science that it is possible to have a UTM with no states, by simply increasing the number of heads and using a different section of the tape as scratch work. Such a construction will be equivalent to ours. 

\subsection{A $(10, 3)$-UTM in Civ:V}

We follow the same approach as in \thmref{civbeutm}: one Worker builds improvements (Roads and Railroads) on tiles to simulate the symbols on a tape, and another Worker improves and unimproves tiles to encode the internal state of the machine based on the relative yield of a specific resource. Just as in Civ:BE, building any improvement in Civ:V takes a certain amount of turns, but the Workers may build any number of them. 

\subsubsection{The Tape:}

As in \thmref{civbeutm}, the tape is a continuous strip over the entire map, and it is comprised of flat, traversable, hexes. 
The set of symbols $\Gamma$ is the two improvements that can be built by the tape Worker anywhere on the map: Roads and Railroads. Building a Railroad requires that the player has unlocked the \emph{Engineering} technology. Then $\Gamma = \{ \text{No Improvement}, \text{Road}, \text{Railroad} \}$.

\subsubsection{The States:}

In Civ:V it is no longer possible to remove improvements outside of Roads and Railroads, so we must rely on these improvements for state tracking. 
The Worker in charge of the states builds and removes Railroads to encode the state. This is carried out in a reserved section of the map (e.g., inside a City), and the total number of Railroads serves as the state value $q_i \in Q$, $Q = \{ 0, \dots, 9 \}$.

\begin{theorem}\label{thm:civvutm}
Suppose \assntworef{assumption1}{assumption2} hold. 
Also suppose that there is a section of the map disjoint from the rest, owned by the player, and comprised of at least nine tiles.
Assume that the player has 
the Engineering technology. 

Then the following is a $(10, 3)$-UTM:

\begin{align}
Q &= \{ 0, \dots, 9 \}, \\
\Gamma &= \{ \text{No Improvement}, \text{Road}, \text{Railroad} \},
\end{align}

where $q_i \in Q$ is the total number of Railroads in the nine tiles, 
and the transition function $\delta$ is as described in \tabref{civv}.

\end{theorem}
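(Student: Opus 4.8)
The plan is to mirror the two-part strategy of \thmref{civbeutm}: first exhibit an explicit bijection between the transition function in \tabref{civv} and a known $(10,3)$-UTM, and then argue that the in-game construction simulates that reference machine with only a bounded (constant) slowdown, so that it too is universal. Because the Civ:V construction is structurally almost identical to the Civ:BE one, a large portion of the argument is inherited essentially verbatim from \appref{civbeapp}; the genuinely new work concerns the two mechanical differences, namely that the state is read off a \emph{count} of Railroads rather than a normalized Culture yield, and that the tape Worker removes symbols directly (no separate pillaging unit is needed, since Roads and Railroads are precisely the improvements Civ:V still lets a Worker remove).

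First I would fix the encodings. On the tape, I associate $\text{No Improvement}\mapsto b$ and the two non-blank symbols with Road and Railroad, in whichever ordering makes \tabref{civv} coincide with the reference machine's table; the head's current read is obtained implicitly, exactly as in \figref{civbe}, from the subset of build/upgrade actions the game offers on the Worker's hex (one cannot lay a Road on an existing Road, cannot re-lay a Railroad, and so on). For the state, the nine reserved, player-owned, pairwise-adjacent-or-not tiles hold between $0$ and $9$ Railroads, and $q_i$ is simply that total; a transition $q_i\to q_j$ is realized by the state Worker building or removing $\size{q_j-q_i}$ Railroads on those tiles. Since the state section is owned by the player and has at least nine tiles, and since a Railroad requires only the Engineering technology (assumed), every value in $Q=\{0,\dots,9\}$ is attainable and every instruction of \tabref{civv} is executable. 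With these encodings pinned down, checking that \tabref{civv} is the image of the reference transition table under the relabeling is a finite, mechanical verification, just as in the Civ:BE case.

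Second I would bound the time overhead. A single logical step of the simulated machine decomposes into: (i) reading the head symbol, which is free (it is merely inspecting the available build actions); (ii) writing the new head symbol, a bounded sequence of Worker build-and-remove actions, each costing a fixed number of turns that depends only on the improvement type; (iii) moving the head one hex left or right, which on flat traversable terrain costs a bounded number of turns; and (iv) editing the state section, at most nine build-or-remove actions, hence again bounded. Summing, each logical step is realized in $\BigO{1}$ game turns, so $T$ steps of the reference UTM take $\BigO{T}$ turns; under \assnref{assumption1} (infinitely many turns) the computation is never truncated, and under \assnref{assumption2} (bi-infinite map) the tape never runs out. Therefore the construction halts exactly when the reference UTM halts and on the same configuration, so it is a $(10,3)$-UTM.

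The main obstacle I anticipate is not the time bound but establishing that the state-section edits and the Worker movements they entail are genuinely \emph{inert} with respect to everything the simulation reads: I must rule out that placing or removing Railroads on the nine reserved tiles, or routing the state Worker among them, can change the head Worker's available actions or the Railroad count anywhere else. In \thmref{civbeutm} this isolation was enforced by the Desert/Ecoscaping hypotheses that pin the Culture yield; here the analogous safeguard is that the state is a pure count on a disjoint owned region, but I would still need to exclude automatic Road/Railroad placement by the game, trade-route or City-boundary side effects, and pillaging by enemy units (the last excluded by the single-player clause of \assnref{assumption1}). Once this inertness is verified, the remainder of the argument follows \appref{civbeapp} with only cosmetic changes.
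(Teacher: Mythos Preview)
Your proposal is correct and follows essentially the same two-part strategy as the paper's proof in \appref{civvapp}: exhibit the bijection of \tabref{civv} with Rogozhin's $(10,3)$-UTM, then bound the per-step overhead by a constant. The only substantive details you would need to pin down to match the paper are that the symbol correspondence used is $\text{No Improvement}\leftrightarrow 0$, $\text{Road}\leftrightarrow 1$, $\text{Railroad}\leftrightarrow b$ (not $\text{No Improvement}\leftrightarrow b$), and that the explicit constant comes from the Civ:V quirk that a Road cannot be laid directly on a Railroad hex, forcing a remove-then-build and yielding the bound $4B_{rr}+1$ turns per instruction.
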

\begin{proof}
In \appref{civvapp}; it is similar to the proof of \thmref{civbeutm}. 
\end{proof}

\begin{figure}
\centering
\includegraphics[width=\columnwidth]{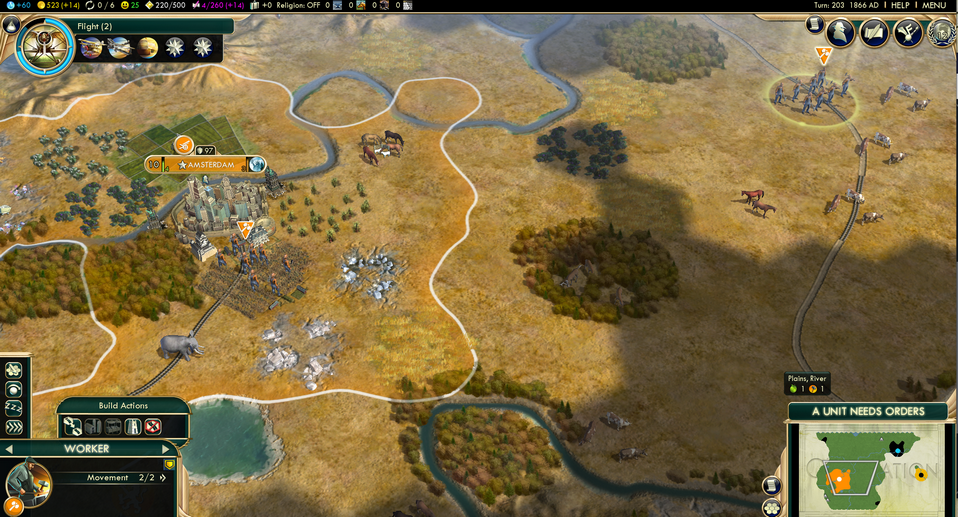}
\caption{A $(10, 3)$-UTM built within Civ:V. It is in the state $q_2$ since the City (left) has two Railroads built: one on the tile with the elephant and another on the plantation. In this case, the area reserved for state tracking is every tile owned by the City. The head (the rightmost Worker, near the top) is positioned over a Railroad hex with two Roads adjacent to it.}
\label{fig:civv}
\end{figure}

\begin{remark}
Just as in \thmref{civbeutm}, it is possible to use a fast-moving pillaging unit (e.g., a Knight) and the ability of Workers to build Forts, to expand the symbol set of the machine to a $(5, 5)$-UTM. This machine would have an even smaller set of instructions. 
\end{remark}

\subsection{A $(48, 2)$-UTM in Civ:VI}

In Civ:VI, Worker units are unable to build improvements indefinitely. It is still possible to build an UTM in Civ:VI if we use the Cities as the head \emph{and} the tape, and leave the Worker to encode only the position of the head. The Citizens are micromanaged to "write" symbols to the tape, by putting them to work specific hexes. As before, we maintain a separate set of tiles to act as the states.

In CiV:VI, there is no upper limit on the number of Cities founded by the player, as long as they are all placed at least $4$ hexes away from other Cities. 
Under \assnref{assumption1}, this allows us to found an infinite number of Cities via a unit called a \emph{Settler}. These units are trained by the City, and once the Settler has been created, the City loses a citizen. Provided there is enough Food, a new Citizen is born in the City after a certain number of turns. No Settlers can be trained if the City has only one citizen. 

Founding a City consumes the Settler, and the City takes its place. The tiles owned at this time are set to be the six closest hexes,\footnote{This action technically builds a City Center, but we will forgo this distinction. Players who play as Russia will found Cities with five extra hexes.} and two Citizens spawn: one works the tile with the highest-possible yield of combined resources, and another--which cannot be retasked--the City itself. All of this occurs on the same turn. 

We will utilize the available tiles and the Citizens to build our $(48, 2)$-UTM. A sample construction can be seen in \figref{civvi}. Although our construction is more complicated than the previous UTMs, we note that all of this is achievable with the built-in mechanics and rules of the game.

\subsubsection{The Tape:}
Assume, for simplicity, that the map is comprised of flat hexes, all of which are of the floodplains category. A single, uninterrupted, grassland strip is the only exception. 
The tape is then the grassland tiles owned by the player. Cities are collocated three hexes away from one another over the grassland strip. 
The symbols on the tape ($\Gamma = \{$"is being worked", "is not being worked"$\}$) are the number of Citizens working on a given grassland tile. 

Our design is unconventional since the tape is not infinite by design. Instead, we consider a transition to be concluded if the City has at least three Citizens. We cap the growth of any new City to four Citizens: one that works in the City (and cannot be moved), two to work the tape tiles, and one that will act as a Settler if need be. If a Settler has been created, then the growth of the City is capped at three. Note that this doubles as a marker for the end of the tape. 

Every time the transition function issues a "move right" command, the controller first checks the population size of the City, and the position of the Worker unit. There are four possible cases:

\begin{itemize}
\item The City has three Citizens, and the Worker is on the rightmost (resp. leftmost) hex. Then the Worker moves right (left), to the next City, and is placed on the leftmost (rightmost) hex.
\item The City has four Citizens. This is the end of the tape and a Settler must be trained to expand the tape. After it is spawned, the City is capped to three, and the Settler is issued the command to move right (resp. left) four hexes, build a City, and grow it to four population. The Worker is issued the command to move right (left) three hexes, and wait until the City is grown.
\end{itemize}

The "move left" command is symmetric to the above.

\subsubsection{The States:}
Just as in \thmstworef{civbeutm}{civvutm}, we rely on a set of hexes disjoint to the tape for our state tracking, and we utilize the relative yield of a specific resource (Faith) to encode it. We assume the player has built $23$ Monasteries without any adjacent districts. Monasteries, when worked under these conditions, yield $+2$ Faith per turn, and are only buildable if the player has ever allied with the Armagh City-State. We also require the player to have built $23$ Farms without any adjacent districts.\footnote{As in the other constructions, we impose specific requirements for our proofs and their generalizability. In practice, any resource yield (e.g., Culture) or improvement (e.g., Ethiopia's Rock-Hewn Church; Egypt's Sphinx; India's Stepwell) would work.} 
Transitioning from one state to the other requires the player to re-assign as many workers as possible to adjust the difference in Faith yields from the beginning of time, which in turn realizes (part of) the state $q_i \in Q$. 

\begin{theorem}\label{thm:civviutm}
Suppose \assntworef{assumption1}{assumption2} hold. Also suppose that the map contains an infinite strip of grassland tiles, with floodplains above and below, and that the player owns one City placed in the grassland strip. 
Also assume that there is a set of tiles disjoint to the strip, owned by the player, and with $23$ Monasteries and $23$ Farms without any adjacent districts. Let $F_*$ be the constant base Faith yield per turn for this player. 

Then, if there are no in-game natural disasters occuring during the time of the execution, the following Turing machine is a $(48, 2)$-UTM:

\begin{align}
Q &= \{ 0n, \dots, 23n \} \cup \{ 0b, \dots, 23b \} , \\
\Gamma &= \{ \text{Is Being Worked}, \text{Is Not Being Worked} \},
\end{align}

where $q_ix \in Q$ is the difference in the Faith yields at turn $t$, $q_i = F_t - F_*$; $x\in \{n, b\}$ indicates whether it is needed to build a new Settler, and the transition function $\delta$ is as described in \tabref{civvi}.

\end{theorem}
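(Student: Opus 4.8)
The plan is to follow the two-step template of \thmref{civbeutm} and \thmref{civvutm}: first exhibit a bijection between the transition function $\delta$ of \tabref{civvi} and the transition function of an existing $(24,2)$-UTM, and then bound by a constant the number of in-game turns consumed by one simulated step. The $48 = 24\times 2$ states split as a core coordinate $q_i\in\{0,\dots,23\}$, to be matched with the states of the reference machine, and a one-bit flag $x\in\{n,b\}$ recording whether the current City sits at its population cap, i.e.\ whether the tape must be grown before the head can leave the written region. The flag never changes which reference transition is simulated; it only selects, on a head move, between the ordinary "step onto the next grassland hex" routine and the "train a Settler, found a City three hexes over, grow it, then step" routine described before the theorem. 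So this step reduces to checking, line by line in \tabref{civvi}, that the projection of $\delta$ onto the core coordinate reproduces the reference table and that $x$ is updated consistently with City growth.

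Next I would verify that the grassland strip faithfully realizes a bi-infinite tape. The head position is the Worker's hex; the symbol under the head is whether that grassland hex is currently worked by a Citizen of its City, so $\Gamma=\{\text{Is Being Worked},\text{Is Not Being Worked}\}$; writing is the single-turn retasking of a Citizen onto or off of that hex; reading is immediate. The delicate point, and the main obstacle, is that the tape is grown lazily: a City owns only finitely many grassland hexes, so the population caps (three Citizens in general, four at the true end of the tape, falling back to three once a Settler has been trained) are used both to \emph{detect} the end of the written tape and to \emph{extend} it. I would check that the four cases listed for "move right" (and, symmetrically, "move left") are exhaustive and mutually exclusive under the invariants the construction maintains, that the Settler-and-City routine always terminates in finitely many turns, and that by \assntworef{assumption1}{assumption2} --- infinitely many turns, an infinite map --- the extension can always be carried out, so that the simulated tape is genuinely unbounded and the head, the Worker, and the City boundaries stay synchronized across an extension.

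For the states I would argue that the $23$ district-free Monasteries form a register whose relative Faith yield $F_t-F_*$ can be driven, by retasking as many Citizens as possible, to any of the $24$ admissible values, realizing $q_i$; and that the $23$ district-free Farms play the companion role of keeping the state-tracking Citizens fed and recording the flag $x$. The effect of a transition on the state is then a single-turn reassignment of Citizens among these tiles. Because the encoding is purely yield-based, it is essential that nothing else perturbs Faith or Food during the run: this is exactly why the hypotheses demand a single player, floodplains terrain, no districts adjacent to the state tiles, and no in-game natural disasters --- a flood or storm would change tile yields and corrupt both the register and the tape.

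Finally, for the time bound I would observe that every simulated step costs a bounded number of turns: retasking Citizens and founding a City each take one turn; moving the Worker or a Settler a fixed small number of flat hexes takes a bounded number of turns; and waiting for a City to grow by one Citizen, or to produce a Settler, takes a number of turns bounded in terms of the constant Food yield of floodplains, which is fixed throughout the run. Hence one step of the reference $(24,2)$-UTM is simulated with only a constant overhead in turns, exactly as in \thmref{civbeutm}; since the reference machine is universal, so is the construction of \tabref{civvi}, with $\size{Q}=48$ and $\size{\Gamma}=2$.
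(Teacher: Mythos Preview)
Your overall structure---bijection with Rogozhin's $(24,2)$-UTM, the $n/b$ flag doubling the state count, feasibility of the lazy tape extension, and the Monastery register---matches the paper's argument closely. The one substantive error is in your final paragraph: you assert that producing a Settler ``takes a number of turns bounded in terms of the constant Food yield of floodplains,'' and conclude a constant per-step overhead ``exactly as in \thmref{civbeutm}.'' This is wrong as a matter of Civ:VI mechanics. Settlers are produced with Production, not Food, and their Production cost increases additively with each Settler already trained; with $L/2$ Cities on the tape the next Settler costs $15L+50$ Production, so the number of turns $S(L)$ to train it grows linearly in the current tape length $L$. The paper's own bound is $C + S(L) + 3$ turns per simulated step and it explicitly flags the resulting \emph{linear} slowdown, in contrast to the constant-factor overhead of the Civ:BE and Civ:V machines.

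This does not threaten universality---a linear-time simulation of a UTM is still a UTM---so your conclusion survives, but the claimed bound and the ``exactly as in \thmref{civbeutm}'' comparison are incorrect and should be replaced by the linear estimate. A minor secondary point: the flag $x$ is not stored in the Farms as you suggest; it is read directly off the population cap of the head's City (three versus four Citizens), while the Farms serve only as the destination for Citizens retasked off Monasteries when lowering the Faith register.
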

\begin{proof}
In \appref{civviapp}.
\end{proof}

\begin{figure}
\centering
\includegraphics[width=\columnwidth]{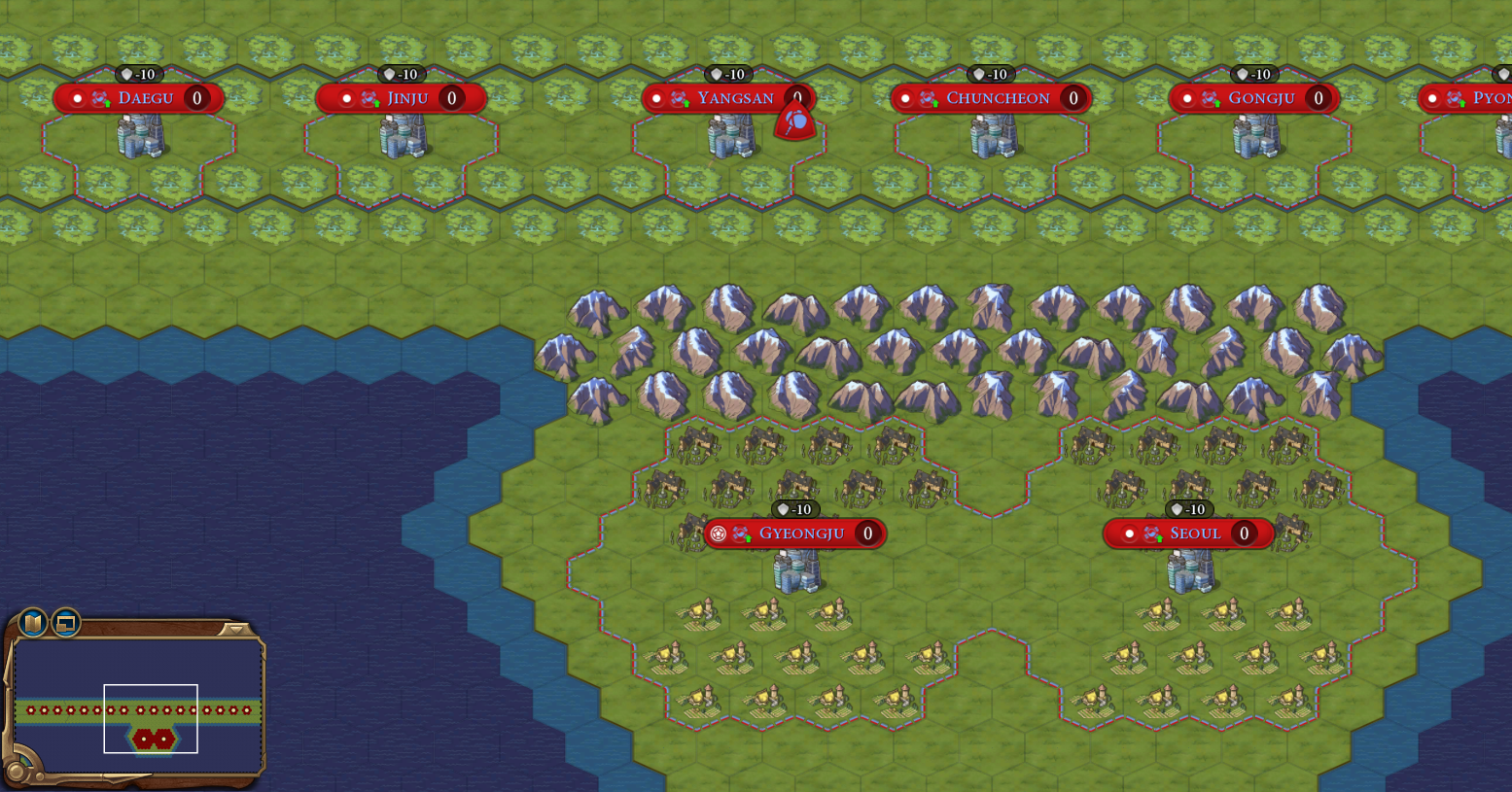}
\caption{Diagram of a $(48, 2)$-UTM built with Civ:VI components and rules. The bottom two Cities act as the state of the machine, and each of the Cities on the top strip contain two cells from the tape. The head position is tracked by the Worker. Refer to \figref{igcivvi} for a working in-game construction.}
\label{fig:civvi}
\end{figure}

In \appref{civviappalt} we discuss an alternative implementation involving two players at war with each other, and which requires fewer cities being built.

\begin{remark}
This construction works with every civilization and update released up to and including the April 2021 Update. It is, however, much more brittle than the other two UTMs introduced in this paper.
\end{remark}

\begin{corollary}
Civ:BE, Civ:V, and Civ:VI are undecidable under \assntworef{assumption1}{assumption2}.
\end{corollary}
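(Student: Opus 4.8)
The plan is to reduce the halting problem to a natural decision problem about each game. Fix one of the three installments and let $U$ be the UTM embedded in it by \thmsthreeref{civbeutm}{civvutm}{civviutm}. Since $U$ is universal, for every Turing machine $M$ and input $w$ there is a finite word $\langle M, w\rangle$ over $U$'s input alphabet $\Sigma$ such that $U$ on input $\langle M, w\rangle$ halts if and only if $M$ halts on $w$. I would first observe that $\langle M,w\rangle$ can be laid out as an \emph{initial game configuration}: under \assnref{assumption2} the map is infinite, so the head Worker and the state Worker can be placed as in the relevant construction, the finitely many non-blank tape symbols of $\langle M,w\rangle$ can be realized by building the corresponding improvements (Roads / Railroads / worked grassland hexes), the state hexes can be set to $q_0$, and every other hex is left blank. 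This map, together with the single player of \assnref{assumption1}, is a finite object, so $\langle M,w\rangle \mapsto (\text{initial game state})$ is a computable map.

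Next I would invoke the correctness half of \thmsthreeref{civbeutm}{civvutm}{civviutm} — the bijection between the game's transition mechanics and $\delta_U$, together with the constant-factor time-delay bound established in the appendices — to conclude that, played from this initial state, the game session realizes exactly the run of $U$ on $\langle M,w\rangle$: each step of $U$ is simulated by a bounded, deterministic block of turns, and $U$ reaches a final state if and only if $M$ halts on $w$. One then defines a decision problem phrased purely in the game's own vocabulary, e.g. $\mathrm{HALT}_{\mathrm{game}}$: given a finitely described game state of the kind produced above, does the session ever reach a configuration in which the head Worker is prescribed no further build/move action by the rules? The reduction shows that $M$ halts on $w$ iff the image state is a yes-instance, so $\mathrm{HALT}_{\mathrm{game}}$ is undecidable; since it is determined entirely by the in-game rules under \assntworef{assumption1}{assumption2}, the game is undecidable. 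Repeating the argument verbatim with the respective UTM for each of Civ:BE, Civ:V, and Civ:VI yields the corollary.

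The one point that needs care — and the main obstacle — is pinning down a decision problem that is simultaneously genuinely "about the game" (units, improvements, yields, turns) and a faithful mirror of halting. In particular one must check that the termination marker used in each construction (no prescribed head action in Civ:BE and Civ:V; the Settler/population cap in Civ:VI) is itself recognizable from the game state, and that non-halting runs never accidentally trigger it — for instance, that a Worker idling while a new City grows, or the end-of-tape expansion step in Civ:VI, is never confused with a genuine halt. Modulo this bookkeeping, undecidability is the standard consequence of Turing completeness, so no further machinery is required.
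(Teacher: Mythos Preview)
Your proposal is correct and follows the same underlying idea as the paper --- namely, that the UTM embeddings of \thmsthreeref{civbeutm}{civvutm}{civviutm} yield undecidability via the halting problem --- but you spell out the reduction in far more detail than the paper does. The paper's own proof is a single sentence (``The existence of an in-game state that simulates a UTM implies undecidability''), whereas you explicitly construct the computable map from $\langle M, w\rangle$ to an initial game configuration, identify the in-game decision problem $\mathrm{HALT}_{\mathrm{game}}$, and flag the one genuine subtlety (distinguishing the halt marker from transient idling such as Civ:VI's City-growth phase). Your version is strictly more rigorous; the paper simply takes this standard consequence of Turing completeness as folklore and does not engage with the bookkeeping you raise.
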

\begin{proof}
Immediate from the proofs of \thmsthreeref{civbeutm}{civvutm}{civviutm}. The existence of an in-game state that simulates a UTM implies undecidability.
\end{proof}

\section{Example: The Busy Beaver in Civ:BE}\label{sec:busybeaver}

\begin{table}\label{tab:bbtm}
\begin{center}
\caption{Transition function for the three-state Turing machine corresponding to BB-$3$, and the equivalent Civ:BE construction. For the Turing machine, the notation is read as $(state, \;read;\; write,\;move,\;set\;state)$. For the Civ:BE machine, $\Delta$ corresponds to the normalized change in Culture per turn.}
\def\arraystretch{1.05}
\begin{tabular}{|l|l||c|}
\hline
Civ:BE state ($\Delta;\; tape\; read$) \;&\; Command to Workers $(tape;\; state)$ & BB-$3$ TM 
 \\ \hline
$\;0$; No Improvement & \;Build a Road and move $R$; Build a Terrascape    \;&  $q_00;1 R q_1$ \\
$\;0$; Road           & \;Build a Road and move $L$; Build $2$ Terrascapes \;&  $q_01;1 L q_2$ \\
\hline
$\;1$; No Improvement & \;Build a Road and move $L$; Remove a Terrascape \;&  $q_10;1 L q_0$ \\
$\;1$; Road           & \;Build a Road and move $R$; No build            \;&  $q_11;1 R q_1$ \\
\hline
$\;2$; No Improvement & \;Build a Road and move $L$; Remove a Terrascape \;&  $q_20;1 L q_1$ \\
$\;2$; Road           & \;HALT                                           \;&  $\;q_21;1 R$ $HALT$ \\
\hline
\end{tabular}
\end{center}
\end{table}

The Busy Beaver game, as described by Rad\'o \cite{BusyBeaver}, is a game in theoretical computer science. 
It challenges the "player" (a Turing machine or a person) to find a halting Turing machine with a specified number of states $n$, such that it writes the most number of $1$s in the tape out of all possible $n$-state Turing machines. The winner is referred to as the $n^{\text{th}}$ Busy Beaver, or BB-$n$. Note that all the Turing machines playing the Busy Beaver game have, by definition, only two symbols, $\Gamma = \{0, 1\}$.

Finding whether a given Turing machine is a Busy Beaver is undecidable, but small Busy Beavers are often used to demonstrate how a Turing machine works. In particular, BB-$3$ is a three-state ($Q = \{q_0, q_1, q_2\}$), two-symbol ($\Gamma = \{0, 1\}$) Turing machine with a transition function as described in \tabref{bbtm}. 

Given the Turing completeness of Civ:BE, we can build BB-$3$ within the game itself. An equivalent construction with our Civ:BE $(10, 3)$-UTM has the states $Q = \{0, 1, 2\}$, symbols $\Gamma = \{\text{No Improvement}, \text{Road}\}$, and with a transition function as described in \tabref{bbtm}. A sample execution of BB-$3$ within Civ:BE can be seen in \figref{civbebb}.\footnote{An animated version of this execution can be found in \url{https://adewynter.github.io/notes/img/BB3.gif}.}

\begin{figure}
\centering
\includegraphics[width=\columnwidth]{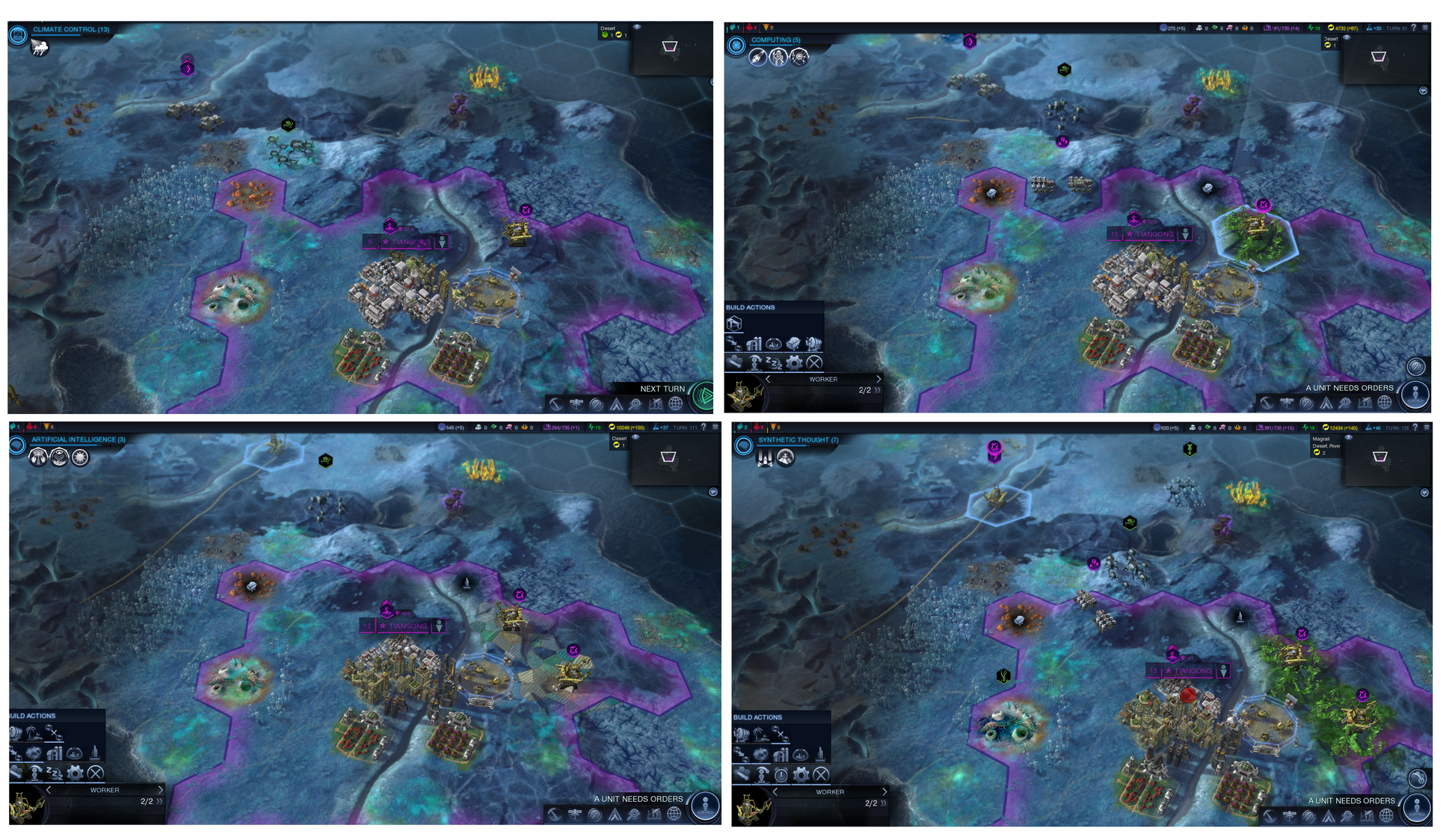}
\caption{Sample execution of BB-$3$ with a Civ:BE Turing machine. The machine executes $11$ instructions $t_1, \dots, t_{11}$ before halting. \emph{Top left}: state corresponding to $t_1 =  q_00;1Rq_1$. \emph{Top right}: state for the machine at $t_2 =  q_10;1Lq_0$. \emph{Bottom left}: state for the machine at $t_{10} = q_01;1Lq_2$. \emph{Bottom right}: state for the machine at $t_{11} =  q_21;HALT$.}
\label{fig:civbebb}
\end{figure}

\section{Discussion}\label{sec:conclusion}

We introduced UTMs for Civ:BE, Civ:V, and Civ:VI, and showed their ability to execute an arbitrary algorithm by running BB-$3$ on our Civ:BE UTM. We also proved that, as an immediate consequence of the existence of a UTM within their internal state, these games are undecidable under the unbounded turn limit and map size assumptions. 

Turing completeness in a system goes beyond a mere curiosity, however: the work here showed that there might exist states of the game where it is undecidable to determine whether there exists a sequence of moves that yields a specific state; and, more importantly, that it is possible to write and execute programs in-game. 

That being said, any realistic construction of a Turing machine is physically limited by the hardware. It follows that if these assumptions were relaxed, Civ:BE, Civ:V, and Civ:VI are equivalent to a linear-bounded automaton. Even then, the games are characterized by long sessions, complex rules, multiple victory conditions, varying scales, and imperfect information, all of which leave room for further analysis from a computational complexity perspective. While other video games have been studied, such as classic Nintendo games \cite{NintendoNP} and traditional board games (both of which led to the development of powerful theoretical tools \cite{DemaineGames}), most 4X video games have not analyzed from this angle. 

A good understanding of the computational complexity associated with these games is important for the development of more efficient and engaging AI systems; and their inherent long-term simulative nature at the macro and micro-management level makes them correlate into contemporary problems in computer science, such as multiple-scale forecasting and online learning. 

\bibliography{biblio}

\section{Appendices}

\subsection{Civ:BE UTM Proof}\label{app:civbeapp}

In this section we prove \thmref{civbeutm}. 
\tabref{civbe} shows a bijective function between the Civ:BE UTM transition function and the $(10, 3)$-UTM program from Rogozhin \cite{Rogozhin}. 
The execution of every command takes at most the time taken to build (or remove) five Terrascapes and one Road. Let $T$ and $M$ the number of turns needed to build a Terrascape and a Road, respectively. Repairing and removing improvements takes only one turn. Then the maximum overhead, for any instruction, will be at most $5T + M$ turns. It follows that this construction is less efficient than a UTM by a constant factor.

\begin{table}\label{tab:civbe}
\begin{center}
\caption{Bijection between the UTM from \thmref{civbeutm} and Rogozhin's $(10, 3)$-UTM. For the Turing machine, the notation is read as $(state, \;read;\; write,\;move,\;set\;state)$. For the Civ:BE machine, $\Delta$ corresponds to the change in Culture per turn relative to the player's base yield.}
\def\arraystretch{1.05}
\begin{tabular}{|l|l||c|}
\hline
Civ:BE state ($\Delta;\;tape\;read$) \;&\; Command to Workers ($tape;\;state$) & $(10, 3)$-UTM  
 \\ \hline
$\;0$; No Improvement &\; Build a Road and move $R$;       No build              \;&  $q_0 0;1 R q_0$ \\
$\;0$; Road           &\; Remove Improvement and move $L$; Build a Terrascape    \;&  $q_0 1;0 L q_1$ \\
$\;0$; Pillaged Road  &\; No Improvement and move $R$;     Build $3$ Terrascapes \;&  $q_0 b;b R q_3$ \\

\hline

$\;1$; No Improvement &\; No Improvement and move $L$;     Build a Terrascape \;&  $q_1 0;0 L q_2$ \\
$\;1$; Road           &\; Remove Improvement and move $L$; No build           \;&  $q_1 1;0 L q_1$ \\
$\;1$; Pillaged Road  &\; No Improvement and move $L$;     No build           \;&  $q_1 b;b L q_1$ \\

\hline

$\;2$; No Improvement &\; No Improvement and move $L$;  Remove a Terrascape   \;& $q_2 0;0 L q_1$\\
$\;2$; Road           &\; Pillage the Road and move $L$; Build $3$ Terrascapes\;& $q_2 1;b L q_5$\\
$\;2$; Pillaged Road  &\; No Improvement and move $R$;  Remove $2$ Terrascapes\;& $q_2 b;b R q_0$\\

\hline

$\;3$; No Improvement &\; Build a Road and move $R$;   Remove $3$ Terrascapes\;& $q_3 0;1 R q_0$\\
$\;3$; Road           &\; No Improvement and move $R$; Build a Terrascape    \;& $q_3 1;1 R q_4$\\
$\;3$; Pillaged Road  &\; Repair the Road and move $L$;   No build           \;& $q_3 b;1 L q_3$\\

\hline

$\;4$; No Improvement &\; Build a Road, Pillage it, and move $L$; Remove $2$ Terrascapes \;& $q_4 0;b L q_2$ \\
$\;4$; Road           &\; No Improvement and move $R$;  No build                         \;& $q_4 1;1 R q_4$ \\
$\;4$; Pillaged Road  &\; No Improvement and move $R$;  No build                         \;& $q_4 b;b R q_4$ \\

\hline

$\;5$; No Improvement &\; Build a Road and move $L$;   Build a Terrascape \;& $q_5 0;1 L q_6$ \\
$\;5$; Road           &\; No Improvement and move $L$; No build           \;& $q_5 1;1 L q_5$ \\
$\;5$; Pillaged Road  &\; No Improvement and move $L$; No build           \;& $q_5 b;b L q_5$ \\

\hline

$\;6$; No Improvement &\; No Improvement and move $R$; Build a Terrascape   \;&  $q_6 0;0 R q_7$ \\
$\;6$; Road           &\; HALT                                              \;&  $\;q_6 1;\;  HALT $ \\
$\;6$; Pillaged Road  &\; No Improvement and move $L$; Build $2$ Terrascapes\;&  $q_6 b;b L q_8$ \\

\hline

$\;7$; No Improvement &\; Build a Road and move $L$;   Remove $2$ Terrascapes \;& $q_7 0;1 L q_5$\\
$\;7$; Road           &\; No Improvement and move $R$; No build               \;& $q_7 1;1 R q_7$\\
$\;7$; Pillaged Road  &\; No Improvement and move $R$; No build               \;& $q_7 b;b R q_7$\\

\hline

$\;8$; No Improvement &\; Build a Road and move $L$;       Build a Terrascape     \;& $q_8 0;1 L q_9$\\
$\;8$; Road           &\; Remove Improvement and move $R$; Build a Terrascape     \;& $q_8 1;0 R q_9$\\
$\;8$; Pillaged Road  &\; Remove Improvement and move $L$; Remove $5$ Terrascapes \;& $q_8 b;0 L q_3$\\

\hline

$\;9$; No Improvement &\; No Improvement and move $R$;     Remove $5$ Terrascapes \;& $q_9 0;0 R q_4$ \\
$\;9$; Road           &\; Remove Improvement and move $R$; No build               \;& $q_9 1;0 R q_9$ \\
$\;9$; Pillaged Road  &\; No Improvement and move $R$;     Remove a Terrascape    \;& $q_9 b;b R q_8$ \\

\hline

\end{tabular}
\end{center}
\end{table}

\subsection{Civ:V UTM Proof}\label{app:civvapp}

In this section we prove \thmref{civvutm}. 
\tabref{civv} shows a bijective function between a Civ:V UTM transition function and the $(10, 3)$-UTM program from Rogozhin \cite{Rogozhin}. 
The execution of every command will take at most the time taken to build (or remove) five Railroads, in addition to removing a Railroad and building a Road. This is because the mechanics of Civ:V do not allow the Workers to build a Road on a Railroad tile, so the Railroad must be removed first. This last action takes one turn. 
Let $B_{rr} > 1$ be the cost of building a Railroad. The Worker in charge of states will have to build at most $3$ Railroads; and the Worker in charge of the tape will have to at most remove a Road and build a Railroad instead, at a cost of $1 + B_{rr}$ turns. 
Then the maximum overhead, for any instruction, is at most $4B_{rr} + 1$ turns. It follows that this construction, like the UTM presented in \thmref{civbeutm}, is less efficient than a UTM by only a constant factor.

\begin{table}\label{tab:civv}
\begin{center}
\caption{Bijection between the UTM from \thmref{civvutm} and Rogozhin's $(10, 3)$-UTM. For the Turing machine, the notation is read as $(state, \;read;\; write,\;move,\;set\;state)$. For the Civ:V machine, $\Delta$ corresponds to the number of Railroads in the state tiles.}
\def\arraystretch{1.05}
\begin{tabular}{|l|l||c|}
\hline
Civ:V state ($\Delta;\;tape\;read$) \;&\; Command to Workers ($tape;\;state$) & $(10, 3)$-UTM  
 \\ \hline
$\;0$; No Improvement &\; Build a Road and move $R$;       No build            \;&\;  $q_0$ ; $01$, $R q_0$ \\
$\;0$; Road           &\; Remove Improvement and move $L$; Build a Railroad    \;&\;  $q_0$ ; $10$, $L q_1$ \\
$\;0$; Railroad       &\; No Improvement and move $R$;     Build $3$ Railroads \;&\;  $q_0$ ; $bb$, $R q_3$ \\

\hline

$\;1$; No Improvement &\; No Improvement and move $L$;     Build a Railroad \;&\;  $q_1$ ; $00$, $L q_2$ \\
$\;1$; Road           &\; Remove Improvement and move $L$; No build         \;&\;  $q_1$ ; $10$, $L q_1$ \\
$\;1$; Railroad       &\; No Improvement and move $L$;     No build         \;&\;  $q_1$ ; $bb$, $L q_1$ \\

\hline

$\;2$; No Improvement &\; No Improvement and move $L$;   Remove a Railroad   \;&\;$q_2$ ; $00$, $L q_1$\\
$\;2$; Road           &\; Build a Railroad and move $L$; Build $3$ Railroads \;&\; $q_2$ ; $1b$, $L q_5$\\
$\;2$; Railroad       &\; No Improvement and move $R$;   Remove $2$ Railroads\;&\; $q_2$ ; $bb$, $R q_0$\\

\hline

$\;3$; No Improvement &\; Build a Road and move $R$;   Remove $3$ Railroads\;&\; $q_3$ ; $01$, $R q_0$\\
$\;3$; Road           &\; No Improvement and move $R$; Build a Railroad    \;&\; $q_3$ ; $11$, $R q_4$\\
$\;3$; Magrail        &\; Build a Road and move $L$;   No build            \;&\; $q_3$ ; $b1$, $L q_3$\\

\hline

$\;4$; No Improvement &\; Build a Railroad and move $L$; Remove $2$ Railroads \;&\; $q_4$ ; $0b$, $L q_2$ \\
$\;4$; Road           &\; No Improvement and move $R$;   No build             \;&\; $q_4$ ; $11$, $R q_4$ \\
$\;4$; Railroad       &\; No Improvement and move $R$;   No build             \;&\; $q_4$ ; $bb$, $R q_4$ \\

\hline

$\;5$; No Improvement &\; Build a Road and move $L$;   Build a Railroad \;&\; $q_5 0; 1 L q_6$ \\
$\;5$; Road           &\; No Improvement and move $L$; No build         \;&\; $q_5 1; 1 L q_5$ \\
$\;5$; Railroad       &\; No Improvement and move $L$; No build         \;&\; $q_5 b; b L q_5$ \\

\hline

$\;6$; No Improvement &\; No Improvement and move $R$; Build a Railroad   \;&\;  $q_6 0; 0 R q_7$ \\
$\;6$; Road           &\; HALT                                            \;&\;  $\;q_6 1;\;   HALT $ \\
$\;6$; Railroad       &\; No Improvement and move $L$; Build $2$ Railroads\;&\;  $q_6 b; b L q_8$ \\

\hline

$\;7$; No Improvement &\; Build a Road and move $L$;   Remove $2$ Railroads \;&\; $q_7 0;1 L q_5$\\
$\;7$; Road           &\; No Improvement and move $R$; No build             \;&\; $q_7 1;1 R q_7$\\
$\;7$; Railroad       &\; No Improvement and move $R$; No build             \;&\; $q_7 b;b R q_7$\\

\hline

$\;8$; No Improvement &\; Build a Road and move $L$;       Build a Railroad    \;&\; $q_8 0; 1 L q_9$\\
$\;8$; Road           &\; Remove Improvement and move $R$; Build a Railroad    \;&\; $q_8 1; 0 R q_9$\\
$\;8$; Railroad       &\; Remove Improvement and move $L$; Remove $5$ Railroads\;&\; $q_8 b; 0 L q_3$\\

\hline

$\;9$; No Improvement &\; No Improvement and move $R$;     Remove $5$ Railroads\;&\; $q_9 0;0 R q_4$ \\
$\;9$; Road           &\; Remove Improvement and move $R$; No build            \;&\; $q_9 1;0 R q_9$ \\
$\;9$; Railroad       &\; No Improvement and move $R$;     Remove a Railroad   \;&\; $q_9 b;b R q_8$ \\

\hline

\end{tabular}
\end{center}
\end{table}

\subsection{Civ:VI $(48, 2)$-UTM Proof}\label{app:civviapp}

In this section we present a partial proof of \thmref{civviutm}: \tabref{civvi} shows a bijective function between the Civ:VI UTM and Rogozhin's $(24, 2)$-UTM. However, the mechanics around expanding cities are left out, as it is understood that any $R$ or $L$ command would have an extra branching statement to test the population size and build the new Settler, thus making \tabref{civvi} a $(48, 2)$-UTM. Concretely, let $s_t, s'_t$ be states of the machine at some turn $t$, with $n \in s_t$ and $b \in s'_t$ being the only distinction between them. Then $s_t$ is equivalent to executing $s_t$, training a Settler, founding a City, and growing it to four population, and then executing $s_{t+1}$. On the other hand, $s'_t$ would immediately execute $s_{t+1}$.

Each Citizen in a City consumes $2$ Food per turn, so at any point in time the City will need at most $8$ Food per turn. The City itself has a base yield of $2$ ($+2$ since it is on grassland), and both grassland tiles being worked at the same time will provide $4$ Food. This yields a maximum Food per turn of $8$. It is easy to see that this is also the minimum possible amount of Food produced by the City, since floodplains provide $+3$ food. Thus maintaining a City as an operational unit of the tape is feasible. 

Let $L$ be the length of the tape at any given time. Then the number of Cities by the player is given by $L/2$, and the cost of training a new Settler (in Production units) will be $15L + 50$, or $S(L)$ turns. 
Let $C$ be the number of turns required to grow a City to $8$ population. It takes a Settler $3$ turns to found a new City. Then the total overhead for this UTM is given by $C + S(L) + 3$. It follows that this UTM is linearly less efficient than a UTM. 

\begin{table}\label{tab:civvi}
\begin{center}
\caption{Relation between the UTM from \thmref{civviutm} and Rogozhin's $(24, 2)$-UTM, for the states not requiring a new Settler. For the Turing machine, the notation is read as $(state, \;read;\; write,\;move,\;set\;state)$. For the Civ:VI machine, $\Delta$ is the change in Faith from the player's base yield, and the command "Work $n$ Monasteries/Farms" involves reassigning Citizens to Farms or Monasteries as needed.}
\begin{tabular}{|l|l||c|}
\hline
Civ:VI state ($\Delta;\;tape\;read$) \;&\; Command to Workers ($tape;\;state$) & $(24, 2)$-UTM  
 \\ \hline
$\;0$; Is Not Being Worked &\; Is Not Being Worked, move $R$; Work $5$ more Monasteries \;&\;  $q_1 0;0 R q_5$ \\
$\;0$; Is Being Worked     &\; Is Being Worked,     move $R$; Work $2$ more Monasteries \;&\;  $q_1 1;1 R q_2$ \\
$\;1$; Is Not Being Worked &\; Is Being Worked,     move $R$; Work $0$ more Monasteries \;&\;  $q_2 0;1 R q_1$ \\
$\;1$; Is Being Worked     &\; Is Being Worked,     move $L$; Work $2$ more Monasteries \;&\;  $q_2 1;1 L q_3$ \\
$\;2$; Is Not Being Worked &\; Is Not Being Worked, move $L$; Work $2$ more Monasteries \;&\;  $q_3 0;0 L q_4$ \\
$\;2$; Is Being Worked     &\; Is Not Being Worked, move $L$; Work $0$ more Monasteries \;&\;  $q_3 1;0 L q_2$ \\

\hline  

$\;3$; Is Not Being Worked &\; Is Being Worked,     move $L$; Work $9$ more Monasteries \;&\;  $q_4 0;1 L q_{12}$ \\
$\;3$; Is Being Worked     &\; Is Not Being Worked, move $L$; Work $6$ more Monasteries \;&\;  $q_4 1;0 L q_9$ \\
$\;4$; Is Not Being Worked &\; Is Being Worked,     move $R$; Work $3$ more Farms       \;&\;  $q_5 0;1 R q_1$ \\
$\;4$; Is Being Worked     &\; Is Not Being Worked, move $L$; Work $2$ more Monasteries \;&\;  $q_5 1;0 L q_6$ \\
$\;5$; Is Not Being Worked &\; Is Not Being Worked, move $L$; Work $2$ more Monasteries \;&\;  $q_6 0;0 L q_7$ \\
$\;5$; Is Being Worked     &\; Is Being Worked,     move $L$; Work $2$ more Monasteries \;&\;  $q_6 1;1 L q_7$ \\

\hline

$\;6$; Is Not Being Worked &\; Is Being Worked,     move $L$; Work $2$  more Monasteries \;&\;  $q_7 0;1 L q_8$ \\
$\;6$; Is Being Worked     &\; Is Not Being Worked, move $L$; Work $0$  more Monasteries \;&\;  $q_7 1;0 L q_6$ \\
$\;7$; Is Not Being Worked &\; Is Not Being Worked, move $L$; Work $0$  more Monasteries \;&\;  $q_8 0;0 L q_7$ \\
$\;7$; Is Being Worked     &\; Is Being Worked,     move $R$; Work $5$  more Farms       \;&\;  $q_8 1;1 R q_2$ \\
$\;8$; Is Not Being Worked &\; Is Not Being Worked, move $R$; Work $11$ more Monasteries \;&\;  $q_9 0;0 R q_{19}$ \\
$\;8$; Is Being Worked     &\; Is Being Worked,     move $L$; Work $4$  more Farms       \;&\;  $q_9 1;1 L q_4$ \\

\hline

$\;9$; Is Not Being Worked &\; Is Being Worked,     move $L$; Work $5$ more Farms       \;&\;  $q_{10} 0;1 L q_4$ \\
$\;9$; Is Being Worked     &\; Is Not Being Worked, move $R$; Work $4$ more Monasteries \;&\;  $q_{10} 1;0 R q_{13}$ \\
$\;10$;Is Not Being Worked &\; Is Not Being Worked, move $L$; Work $6$ more Farms       \;&\;  $q_{11} 0;0 L q_4$ \\
$\;10$;Is Being Worked     &\; HALT                                                     \;&\;  $\;q_{11} 1;\; HALT $ \\
$\;11$;Is Not Being Worked &\; Is Not Being Worked, move $R$; Work $8$ more Monasteries \;&\;  $q_{12} 0;0 R q_{19}$ \\
$\;11$;Is Being Worked     &\; Is Being Worked,     move $L$; Work $3$ more Monasteries \;&\;  $q_{12} 1;1 L q_{14}$ \\

\hline

$\;12$; Is Not Being Worked &\; Is Not Being Worked, move $R$; Work $2$  more Farms       \;&\;  $q_{13} 0;0 R q_{10}$ \\
$\;12$; Is Being Worked     &\; Is Being Worked,     move $R$; Work $12$ more Monasteries \;&\;  $q_{13} 1;1 R q_{24}$ \\
$\;13$; Is Not Being Worked &\; Is Not Being Worked, move $L$; Work $2$  more Monasteries \;&\;  $q_{14} 0;0 L q_{15}$ \\
$\;13$; Is Being Worked     &\; Is Being Worked,     move $L$; Work $2$  more Farms       \;&\;  $q_{14} 1;1 L q_{11}$ \\
$\;14$; Is Not Being Worked &\; Is Not Being Worked, move $R$; Work $2$  more Monasteries \;&\;  $q_{15} 0;0 R q_{16}$ \\
$\;14$; Is Being Worked     &\; Is Being Worked,     move $R$; Work $3$  more Monasteries \;&\;  $q_{15} 1;1 R q_{17}$ \\

\hline

$\;15$; Is Not Being Worked &\; Is Not Being Worked, move $R$; Work $0$ more Monasteries \;&\;  $q_{16} 0;0 R q_{15}$ \\
$\;15$; Is Being Worked     &\; Is Being Worked,     move $R$; Work $5$ more Farms       \;&\;  $q_{16} 1;1 R q_{10}$ \\
$\;16$; Is Not Being Worked &\; Is Not Being Worked, move $R$; Work $0$ more Monasteries \;&\;  $q_{17} 0;0 R q_{16}$ \\
$\;16$; Is Being Worked     &\; Is Being Worked,     move $R$; Work $5$ more Monasteries \;&\;  $q_{17} 1;1 R q_{21}$ \\
$\;17$; Is Not Being Worked &\; Is Not Being Worked, move $R$; Work $2$ more Monasteries \;&\;  $q_{18} 0;0 R q_{19}$ \\
$\;17$; Is Being Worked     &\; Is Being Worked,     move $R$; Work $3$ more Monasteries \;&\;  $q_{18} 1;1 R q_{20}$ \\

\hline

$\;18$; Is Not Being Worked &\; Is Being Worked,     move $L$; Work $15$ more Farms      \;&\;  $q_{19} 0;1 L q_3$ \\
$\;18$; Is Being Worked     &\; Is Being Worked,     move $R$; Work $0$ more Monasteries \;&\;  $q_{19} 1;1 R q_{18}$ \\
$\;19$; Is Not Being Worked &\; Is Being Worked,     move $R$; Work $1$ more Farms       \;&\;  $q_{20} 0;1 R q_{18}$ \\
$\;19$; Is Being Worked     &\; Is Not Being Worked, move $R$; Work $1$ more Farms       \;&\;  $q_{20} 1;0 R q_{18}$ \\
$\;20$; Is Not Being Worked &\; Is Not Being Worked, move $R$; Work $2$ more Monasteries \;&\;  $q_{21} 0;0 R q_{22}$ \\
$\;20$; Is Being Worked     &\; Is Being Worked,     move $R$; Work $3$ more Monasteries \;&\;  $q_{21} 1;1 R q_{23}$ \\

\hline

$\;21$; Is Not Being Worked &\; Is Being Worked,     move $L$; Work $1$  more Monasteries \;&\;  $q_{22} 0;1 L q_{10}$ \\
$\;21$; Is Being Worked     &\; Is Being Worked,     move $R$; Work $0$  more Monasteries \;&\;  $q_{22} 1;1 R q_{21}$ \\
$\;22$; Is Not Being Worked &\; Is Being Worked,     move $R$; Work $1$  more Farms       \;&\;  $q_{23} 0;1 R q_{21}$ \\
$\;22$; Is Being Worked     &\; Is Not Being Worked, move $R$; Work $1$  more Farms       \;&\;  $q_{23} 1;0 R q_{21}$ \\
$\;23$; Is Not Being Worked &\; Is Not Being Worked, move $R$; Work $10$ more Farms       \;&\;  $q_{24} 0;0 R q_{13}$ \\
$\;23$; Is Being Worked     &\; Is Not Being Worked, move $L$; Work $20$ more Farms       \;&\;  $q_{24} 1;0 L q_3$ \\

\hline

\end{tabular}
\end{center}
\end{table}

\begin{remark}
The linear slowdown provided by the cost of every Settler makes this UTM significantly more inefficient than the other UTMs presented in this paper. 
\end{remark}

This construction can be achieved in-game. See \figref{igcivvi} for an example.

\begin{figure}
\centering
\includegraphics[scale=0.7]{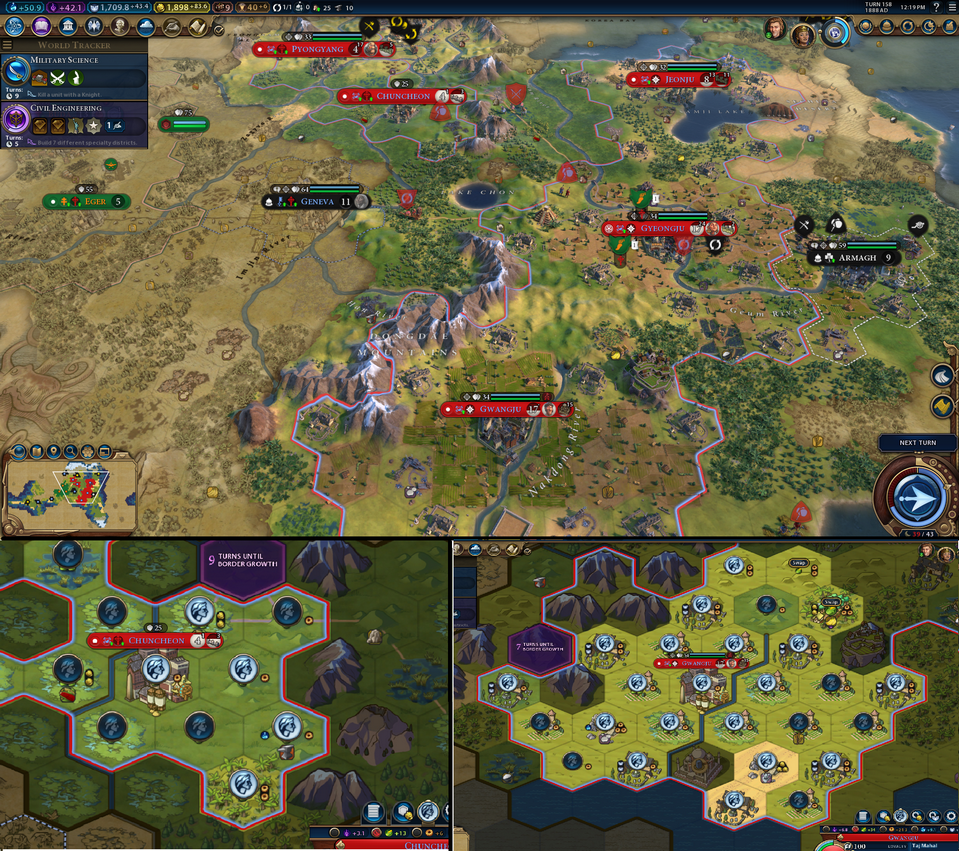}
\caption{A $(48, 2)$-UTM built within Civ:VI. \emph{Top}: the UTM in-game, with three Cities acting as the states and two as the tape. \emph{Bottom right}: one of the Cities holding the state, with $8$ Monasteries and $9$ Farms. \emph{Bottom left:} The tape being read (right hex; Worker not pictured) with a state of "Is Being Worked".}
\label{fig:igcivvi}
\end{figure}

\subsection{An Alternative Civ:VI $(48, 2)$-UTM}\label{app:civviappalt}

In this scenario, we consider two players at war with each other. One player must be playing as Rome, whose unique ability is that Cities founded within $15$ hexes of another City with a trading post, build an automatic road connecting them and a trading post. 

The only change to the construction in \thmref{civviutm} is that instead of tracking symbols by assigning Citizens to hexes, we maintain the road as the tape. The Roman player builds cities and repairs roads with a Worker, and the second player uses a military unit to pillage the roads. Then the set of symbols is $\Gamma = \{$"Has working road", "Has pillaged road"$\}$. Assuming that the players never do anything to affect the computation (e.g., capturing the Workers or Settlers), this construction is equivalent to the one described in the previous sections. The need for a second player is clear given that players cannot pillage improvements within their own borders.

\end{document}